\definecolor{darkgreen}{rgb}{0.09, 0.45, 0.27}
\definecolor{darkred}{rgb}{0.55, 0.0, 0.0}
\renewcommand{\epsilon}{\varepsilon}
\renewcommand{\phi}{\varphi}
\newcommand{\C}{\mathbb{C}}
\newcommand{\F}{\mathbb{F}}
\newcommand{\lnorm}{\lvert \!\vert}
\newcommand{\rnorm}{\vert \! \rvert}
\DeclareMathOperator{\Tr}{Tr}
\newtheorem{theorem}{Theorem}[section]
\newtheorem{definition}[theorem]{Definition}
\newtheorem{proposition}[theorem]{Proposition}
\newtheorem{corollary}[theorem]{Corollary}
\newtheorem{lemma}[theorem]{Lemma}
\newtheorem{remark}[theorem]{Remark}
\newcommand{\ketbra}[2]{| #1 \rangle \langle #2 |}
\newcommand{\bbraket}[2]{\langle #1 \vert #2\rangle}
\newcommand{\injnorm}[1]{\lnorm #1 \rnorm_{\mathrm{inj}}}
\newcommand{\rvline}{\hspace*{-\arraycolsep}\vline\hspace*{-\arraycolsep}}
\title{The Injective norm of CSS quantum error-correcting codes}
\author{Stephane Dartois}
\address{\'Ecole Polytechnique, Institut Polytechnique de Paris, Centre de Mathématiques Laurent Schwartz, 91120 Palaiseau, France}
\email{stephane.dartois@polytechnique.edu}
\author{Gilles Zémor}
\address{Institut de Mathématiques de Bordeaux, UMR 5251, 351 Cours de la Libération, 33400 Talence, France}
\address{Institut Universitaire de France}
\email{zemor@math.u-bordeaux.fr}
\begin{document}

\maketitle

\begin{abstract}
   In this paper, we compute the injective norm — \textit{a.k.a.} geometric entanglement — of standard basis states of CSS quantum error-correcting codes. The injective norm of a quantum state is a measure of genuine multipartite entanglement. Computing this measure is generically NP-hard. However, it has been exactly computed in condensed matter theory — notably in the context of topological phases — for the Kitaev code and its extensions, in works by Orús and collaborators. We extend these results to all CSS codes and thereby obtain the injective norm for a nontrivial, infinite family of quantum states. In doing so, we uncover an interesting connection to matroid theory and Edmonds' intersection theorem.
\end{abstract}
{\bf Keywords: } Geometric entanglement, CSS codes, injective norm, matroids, Edmonds' intersection theorem

\section{Introduction}
\subsection{Context}
We are interested in the multipartite entanglement of the standard basis states of arbitrary CSS quantum error-correcting codes~\cite{Calderbank1996,Steane1996}. Such families of quantum states are of great importance in quantum information and computation, and some of them also appear frequently in condensed matter theory, indeed, LDPC codes can often be seen as phases of matter \cite{kitaev2010topological,haah2013lattice, yoshida2015topological, yin2024low, placke2024topological}. We show that the exact value of the geometric entanglement for these states is entirely dictated by the dimensions of the shortened codes of the classical $X$ code. Our work can be seen as a broad generalization of results obtained in the physics literature~\cite{orus2011topological,orus2014geometric}, with the key distinction that our arguments do not rely on any specific geometric or locality structure, and therefore apply to all CSS codes.

Among the motivations for this work lies the fact that the structure of many-body or multipartite entanglement is only poorly understood in spite of its importance for many applications. 

One of the well-known measures of multipartite entanglement is the injective norm, whose definition we recall in equation \eqref{eq:inj_norm_def} below. The goal of this work is to benchmark the injective norm against a family of interesting quantum states provided by quantum error correcting codes. Computing the injective norm of a quantum state is an NP-hard problem \cite{hillar2013most}, therefore finding families of states for which it can be explicitly computed is an interesting problem by and for itself. In fact, there are a number of works in the literature which study the behavior of the injective norm of families of simple enough quantum states \cite{wei2003geometric,zhu2010additivity} or focus on finding maximally entangled multipartite states with respect to the injective norm \cite{aulbach2010maximally, steinberg2024finding}. 

A series of strong results were recently obtained concerning the typical behavior of the injective norm of random tensors and random quantum states, both in mathematics~\cite{dartois2024injective,bandeira2024geometric,boedihardjo2024injective,bates2025balanced,stojnic2025ground} and from the viewpoint of physicists~\cite{sasakura2024signed}. The methods used in these studies broadly come from landscape complexity, statistical physics, and geometric functional analysis. The motivations range from questions about (classical) locally decodable codes, quantum information, statistical physics of spin glasses, and data analysis. Moreover, the injective norm has been extensively studied numerically in~\cite{fitter2022estimating} for different families of random and deterministic states.

In this context, our work can both be seen as contributing to a program aimed at better understanding tensor norms as well as to a program dedicated to the study of multipartite entanglement measures. The latter have been intensively studied in recent years for random states, with a particular focus on random tensor network states, due to their role as excellent toy models for Ryu-Takayanagi-like formulas. The studied measures of multipartite entanglement aim to generalize Rényi entropies~\cite{penington2023fun,akers2024reflected}, and are constructed out of polynomial local unitary invariants of tensors~\cite{collins2023tensor,jing2012classes,cui2017local}, mirroring how Rényi entropies arise from polynomial local unitary invariants of density matrices. In the multipartite case, the profusion of such invariants obscures their operational meaning and complicates the choice of a canonical family with good properties. By focusing on the injective norm, we step outside the polynomial framework and analyze a non-polynomial quantity.

Let $n$ be a positive integer, let $(\mathcal{H}_i)_{i=1}^n$ be a collection of Hilbert spaces (of possibly different dimensions $d_i$). Given a quantum state $\lvert \psi\rangle \in \bigotimes_{i=1}^n \mathcal{H}_i$, the injective norm of such a state is defined by 
\begin{equation}\label{eq:inj_norm_def}
    \injnorm{\lvert \psi \rangle}:=\max_{\lvert \phi_i\rangle \in \mathcal{H}_i, \langle \phi_i\vert \phi_i\rangle=1} \lvert \langle \psi \vert \phi_1\ldots \phi_n\rangle\rvert.
\end{equation}
One often considers minus the logarithm of the above quantity, called the geometric entanglement,
$$E(\lvert \psi \rangle):=-\log_2 \injnorm{\lvert \psi \rangle}^2.$$
Yet another equivalent quantity, which is just as natural from a geometric point of view, is the distance of $\lvert \psi \rangle$ to the set of separable states, namely
$$d_{\textrm{SEP}}(\lvert \psi \rangle)=\sqrt{2(1-\injnorm{\lvert \psi \rangle})}.$$
This last quantity is sometimes called the Groverian \cite{jung2008reduced}. We see in particular that 
the larger the injective norm of a state is, the closer this state is to a separable state.

In the $n=2$ bipartite case, where $\ket{\psi}\in \mathcal{H}_1\otimes \mathcal{H}_2$, the geometric entanglement is just the $\infty$-Rényi entropy of the density matrix $\rho=\Tr_2(\ketbra{\psi}{\psi})$.

\subsection{Results}
Following standard coding theory practice, we call a subvector space of $\F_2^n$ a {\em classical linear code} over $n$ bits, where $\F_2$ denotes the finite field on two elements.
We shall require very little coding theory background, but for a textbook treatment of classical error-correcting codes, see \cite{macwilliams1977theory}, while \cite{nielsen2000quantum} treats the quantum case.

Let $C$ be a $k$-dimensional linear code over $\F_2^n$. We are interested in the associated quantum state that we denote by $\ket{C}$ defined as,
\begin{equation}
    \label{eq:ket(C)}
\ket{C}=\frac1{\sqrt{\lvert C\rvert}}\sum_{y\in C}\ket{y} \in (\mathbb{C}^2)^{\otimes n}.
\end{equation}
Let $A\subset [n]$ be a subset of coordinate positions. We define the {\em punctured code} on $A$ to be the code on $|A|$ bits consisting of all codewords of $C$ restricted to coordinates of $A$.
Define $j(C)$ to be the smallest integer $j$ such that there exists a partition of $[n]=A\sqcup B$  for which the punctured code on $A$ is of dimension $k$ while the punctured code on $B=[n]\setminus A$ is of dimension $k-j$.

\begin{theorem}[Main theorem]\label{thm:Main}
Let $C\subset \F_2^n$ be a linear code of dimension $k$. Then the injective norm of $\ket{C}$ is given by 
\[
\injnorm{\ket{C}}=2^{-\frac12(k-j(C))}.
\]
\end{theorem}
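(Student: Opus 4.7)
The plan is to sandwich $\injnorm{\ket{C}}$ between matching lower and upper bounds, both equal to $2^{-(k - j(C))/2}$, with Edmonds' matroid intersection theorem bridging the combinatorial definition of $j(C)$ and the quantities that naturally arise on the quantum side.

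\textbf{Lower bound via a discrete product state.} For each subset $S \subseteq [n]$ I would try the product state with $\ket{\phi_i} = \ket{+}$ for $i \in S$ and $\ket{\phi_i} = \ket{0}$ for $i \in \bar S := [n] \setminus S$. Writing $C|_S$ for the subcode of codewords of $C$ supported on $S$, a short computation gives
\[
\bbraket{C}{\phi_1 \cdots \phi_n} = \frac{\lvert \{y \in C : y_{\bar S} = 0\} \rvert}{\sqrt{2^k}\,(\sqrt{2})^{|S|}} = 2^{\dim C|_S - |S|/2 - k/2}.
\]
Optimising over $S$ therefore yields $\injnorm{\ket{C}} \geq 2^{(\max_S(2 \dim C|_S - |S|) - k)/2}$.

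\textbf{Upper bound via bipartite spectral norms.} For every bipartition $[n] = S \sqcup \bar S$ the multipartite injective norm is at most the bipartite spectral norm $\sqrt{\lambda_{\max}(\rho_S)}$, where $\rho_S = \Tr_{\bar S}\ket{C}\bra{C}$. I would compute $\rho_S$ by first grouping codewords by their value on $\bar S$ — which produces fibers that are affine cosets of $C|_S$ inside $C^S$ — and then collapsing repeated fibers on the $S$ side by passing to the quotient $C^S/C|_S$. The upshot is that $\rho_S$ is proportional to a projection onto a subspace of dimension $2^{k - \dim C|_S - \dim C|_{\bar S}}$, so every nonzero eigenvalue equals $2^{\dim C|_S + \dim C|_{\bar S} - k}$, and taking the min over bipartitions gives
\[
\injnorm{\ket{C}} \leq \min_S 2^{(\dim C|_S + \dim C|_{\bar S} - k)/2}.
\]

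\textbf{Matroid identification and conclusion.} The two bounds will coincide once the identities
\[
\max_S (2\dim C|_S - |S|) = j(C) = \min_S (\dim C|_S + \dim C|_{\bar S})
\]
are established. Let $M$ be the column matroid of a generator matrix of $C$, with rank function $r(S) = \dim C^S$, and let $M^*$ be its dual, with rank function $r^*(S) = |S| + r(\bar S) - k$. The key code-matroid dictionary is $\dim C|_S = k - r(\bar S)$, together with the equivalences ``$\dim C^A = k$'' $\Leftrightarrow$ ``$A$ is spanning in $M$'' $\Leftrightarrow$ ``$\bar A$ is independent in $M^*$''. The latter rewrites the definition of $j(C)$ as $j(C) = k - \mu(M, M^*)$, where $\mu(M, M^*)$ is the size of a maximum common independent set of $M$ and $M^*$. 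Edmonds' min--max formula $\mu(M, M^*) = \min_T(r(T) + r^*(\bar T))$, combined with the substitution $T = \bar S$ and elementary algebra, then produces both identities. The main obstacle I anticipate is precisely this matroid translation, and in particular recognising that the definition of $j(C)$ is an instance of matroid intersection between $M$ and $M^*$; with that step carried out, the upper and lower bounds match and the theorem follows.
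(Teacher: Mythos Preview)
Your proposal is correct and follows the same overall architecture as the paper: the lower bound via the product state $\ket{+_S\,0_{\bar S}}$ is identical to the paper's Lemma~\ref{lem:lower_bound}, and the matroid step---recognising $j(C)=k-\mu(M,M^*)$ for the column matroid $M$ and its dual, then invoking Edmonds' intersection theorem---is exactly the content of the paper's Theorem~\ref{thm:pre_main}.

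The one noteworthy difference is the upper bound. The paper (Lemma~\ref{lem:upper_bound}) fixes from the outset a single bipartition, namely an information set $A$ with $\mathrm{rk}(\mathbf{G}_{\bar A})=k-j(C)$, puts $\mathbf{G}$ in standard form, and reads off the spectrum of $\rho_{\bar A}$ to get $2^{-(k-j(C))/2}$ directly. You instead compute the spectrum of $\rho_S$ for an \emph{arbitrary} bipartition, obtaining the closed form $\lambda_{\max}(\rho_S)=2^{\dim C|_S+\dim C|_{\bar S}-k}$, and then minimise over $S$. Your route is slightly more informative, since it shows the paper's choice of bipartition is optimal among all flattenings rather than merely sufficient. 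One small clarification: the identity $\min_S(\dim C|_S+\dim C|_{\bar S})=j(C)$ does not actually need Edmonds---taking $S$ to be a basis of the column matroid achieving $j(C)$ gives $\dim C|_S=j(C)$ and $\dim C|_{\bar S}=0$, and the reverse inequality is elementary---so only the lower-bound identity $\delta(C)=j(C)$ genuinely requires the matroid intersection theorem, in both your argument and the paper's.
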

Theorem~\ref{thm:Main} will be derived by proving matching upper and lower bounds on the injective norm of the state $\ket{C}$. The upper bound reads $\injnorm{\ket{C}}\le 2^{-\frac12(k-j(C))}$ while the lower bound is expressed in terms of an additional quantity $\delta(C)$. One obtains that $\injnorm{\ket{C}}\ge 2^{-\frac12(k-\delta(C))}$, with $\delta(C):=\max_{C_0}2k_0-\ell(C_0)$ 
where the maximum is over shortened codes $C_0$ of $C$ (see definition \ref{def:shortened_codes}), $k_0$ is the dimension of $C_0$ and $\ell(C_0)$ its length. 
Proving the upper and lower bounds is the object of Section~\ref{sec:up-low-bound}.

The most technical part of the proof is to show that the upper and lower bounds match, namely that $j(C)=\delta(C)$. This last equality is the content of Theorem \ref{thm:pre_main} and 
Section~\ref{sec:main_equality} is devoted to its proof. The core argument involves using matroid theory and notably Edmonds' intersection theorem, an abstract version of the max flow-min cut principle.

\begin{remark}
The family of states of the form  
\begin{equation}
    \label{eq:ket{C}}
\ket{C}=\frac1{\sqrt{|C|}}\sum_{y\in C}\ket{y}
\end{equation}
captures a number of states that we are commonly interested in. An observation is that the celebrated GHZ states,  
$$
\ket{\text{GHZ}_n}=\frac1{\sqrt{2}}\bigl(\ket{\,\underbrace{0\ldots 0}_{n \text{ times}}}+\ket{\,\underbrace{1\ldots 1}_{n \text{ times}}}\bigr),
$$ 
belong to this family: indeed, they correspond to the case when $C$ is the well-known repetition code. On the other hand, some canonical examples fall outside this class. For instance, the $W$ state,  
$$
\ket{W}=\tfrac1{\sqrt{3}}(\ket{100}+\ket{010}+\ket{001}),
$$  
cannot be written in the form~\eqref{eq:ket{C}}, since the words $100,\,010,\,001$ do not form a linear subspace of $\F_2^3$ 
\end{remark}

Moreover, we will make the case that Theorem~\ref{thm:Main} extends and gives a formula for the injective norm of all basis states of any quantum CSS code. To be specific, let $\mathcal{H}=(\C^2)^{\otimes n}$ be the total Hilbert space of $n$ qubits.
Let $C_1\subseteq \F_2^{n}, C_2\subseteq \F_2^{n}$ be two classical binary linear codes satisfying $C_2\subset C_1$. 
We recall that $(C_1,C_2)$ defines a {\em quantum CSS code} \cite{Calderbank1996,Steane1996} that consists of the subspace
of $\mathcal{H}$ equal to the linear span of all states
\begin{equation}\label{eq:basis_states}
    \ket{z}:=\frac1{\sqrt{\lvert C_2\rvert}}\sum_{y\in z}\ket{y}
\end{equation}
where $z$ ranges over all cosets $z\in C_1/C_2$.
A simple consequence of Proposition \ref{prop:injnorm_const_LU_classes} below and Theorem \ref{thm:Main} is that, 
\begin{corollary}
Let $C_2$ be a code of dimension $k$ in $\F_2^n$. For all $x\in C_1/C_2$, $\injnorm{\ket{x}}=2^{-\frac12(k-j(C_2))}$. 
\end{corollary}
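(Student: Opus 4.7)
The plan is to reduce each coset basis state to the linear-code state $\ket{C_2}$ by a local unitary, and then invoke Proposition~\ref{prop:injnorm_const_LU_classes} together with the main theorem. The key observation is that every coset $x\in C_1/C_2$ can be written as $x=v+C_2$ for some $v\in C_1$, and shifting a classical code by $v$ amounts to flipping the qubits indexed by the support of $v$.

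Concretely, fix a representative $v=(v_1,\dots,v_n)\in C_1$ of $x$, and let $X$ denote the Pauli bit-flip on $\C^2$. Define the product operator
\[
U_v \;=\; X^{v_1}\otimes X^{v_2}\otimes \cdots\otimes X^{v_n}.
\]
This is manifestly a local unitary, and satisfies $U_v\ket{y}=\ket{v+y}$ for every $y\in\F_2^n$, where addition is in $\F_2^n$. Applying $U_v$ to $\ket{C_2}=\frac{1}{\sqrt{|C_2|}}\sum_{y\in C_2}\ket{y}$ one gets
\[
U_v\ket{C_2} \;=\; \frac{1}{\sqrt{|C_2|}}\sum_{y\in C_2}\ket{v+y} \;=\; \ket{x},
\]
by the definition~\eqref{eq:basis_states} of the coset basis state $\ket{x}$.

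From here the conclusion is immediate. By Proposition~\ref{prop:injnorm_const_LU_classes}, the injective norm is invariant under local unitaries, so $\injnorm{\ket{x}}=\injnorm{U_v\ket{C_2}}=\injnorm{\ket{C_2}}$. Since $\dim C_2=k$, Theorem~\ref{thm:Main} applied to $C=C_2$ yields $\injnorm{\ket{C_2}}=2^{-\frac12(k-j(C_2))}$, which is what we wanted. Note that the choice of representative $v$ is irrelevant: a different representative $v'=v+y_0$ with $y_0\in C_2$ simply gives $U_{v'}=U_{y_0}U_v$, and $U_{y_0}\ket{C_2}=\ket{C_2}$ by a change of summation variable, so no ambiguity arises.

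There is essentially no obstacle here; the only content is recognizing that shifts of a classical code by a fixed vector are implemented by tensor products of Pauli-$X$'s, which are local unitaries. Everything non-trivial has already been done in Theorem~\ref{thm:Main} and in the (stated elsewhere) invariance property of the injective norm under local unitaries.
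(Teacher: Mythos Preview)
Your proof is correct and follows essentially the same route as the paper, which derives the corollary directly from Proposition~\ref{prop:injnorm_const_LU_classes} and Theorem~\ref{thm:Main}; you have simply unpacked the local-unitary shift argument (the content of Proposition~\ref{prop:coset}) inline. One small mislabeling: where you write ``By Proposition~\ref{prop:injnorm_const_LU_classes}, the injective norm is invariant under local unitaries,'' the statement you are invoking is actually Proposition~\ref{prop:invariance}; Proposition~\ref{prop:injnorm_const_LU_classes} already asserts constancy of the injective norm across all basis states, which would let you skip the explicit $U_v$ construction altogether.
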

This means that standard basis states of a CSS code all have the same injective norm.

{\it Relation to previous results:}

The geometric entanglement of the state $\ket{C}$ is 
$$E(\ket{C})=k-j(C).$$ In the case of the Kitaev toric code, we have a code $C$ of length $n$ and the dimension $n-1$,
and the work \cite{orus2011topological} of Or\'us and Wei shows that the geometric entanglement of basis states is $n-1$, that is $j(C)$ vanishes for the Kitaev code. Therefore, the Kitaev code produces basis states which are maximally entangled for the geometric entanglement among basis states of CSS codes. The work \cite{orus2011topological} went further and computed the topological geometric entanglement by grouping qubits in larger and larger clusters. Their clustering method (reminiscent of block spin renormalization) is highly dependent on a locality property, which is not obviously generalized in the case of general CSS codes. We therefore postpone the study of the relevant generalization of topological geometric entanglement for CSS codes to further work. 

{\it Computational complexity for $j(C)$:}

Regarding the problem of computing the injective norm of tensors, one noticeable fact is that, when specialised to the basis states of a CSS quantum error-correcting code, the problem becomes discrete,  whereas the initial optimization problem on a product of spheres is of a continuous nature.

A natural question is whether this discretization reduced the difficulty of computing the injective norm. It turns out that our matroid formulation of the problem implies the existence of an efficient (polynomial in the number of qubits $n$) greedy augmenting path algorithm that computes $j(C)$: this is a consequence of the work of Edmonds \cite{edmonds1971matroids}.
\section{Acknowledgments}
S.D. is grateful to the Institut de Mathématiques de Bordeaux for their hospitality, where part of this work was carried out. The work of S.D. was partly supported by the ANR grants ANR-25-CE40-1380 and ANR-25-CE40-5672. G.Z. was supported by
Plan France 2030 through the project NISQ2LSQ, ANR-22-PETQ-0006.
\section{Preliminaries}
\subsection{General facts on injective norm}
Let $\mathcal{H}$ be a Hilbert space of the general form $\mathcal{H}=\bigotimes_{i=1}^n \mathcal{H}_i$.
Let us start by pointing out that the injective norm \eqref{eq:inj_norm_def} of a state clearly satisfies {\em local unitary invariance}, namely:
\begin{proposition}
\label{prop:invariance}
Let $U_i\in U(\mathcal{H}_i)$, then $\injnorm{\ket{\psi}}=\injnorm{(U_1\otimes\ldots \otimes U_n)\ket{\psi}}$, where each $U_i$ is a unitary operator acting separately on each Hilbert space $\mathcal{H}_i$.
\end{proposition}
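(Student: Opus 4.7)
The plan is to unfold the definition of the injective norm on the transformed state and push the local unitaries onto the test vectors $\ket{\phi_i}$, exploiting that the tensor product of operators acts coordinate-wise. Concretely, writing $\ket{\psi'} = (U_1\otimes\cdots\otimes U_n)\ket{\psi}$, I would compute
\[
\langle \psi' \vert \phi_1\cdots\phi_n\rangle
= \langle \psi \vert (U_1^\dagger\otimes\cdots\otimes U_n^\dagger)(\ket{\phi_1}\otimes\cdots\otimes\ket{\phi_n})\rangle
= \langle \psi \vert U_1^\dagger\phi_1\cdots U_n^\dagger\phi_n\rangle,
\]
using only the defining identity $(A\otimes B)(\ket{u}\otimes\ket{v}) = A\ket{u}\otimes B\ket{v}$ and the fact that the adjoint of a tensor product is the tensor product of adjoints.

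Next, I would observe that for each $i$ the map $\ket{\phi_i}\mapsto \ket{\phi_i'} := U_i^\dagger\ket{\phi_i}$ is a bijection of the unit sphere of $\mathcal{H}_i$ onto itself, since $U_i^\dagger$ is unitary and hence isometric. Consequently the set of product unit vectors $\{\ket{\phi_1'\cdots\phi_n'}\}$ obtained by letting each $\ket{\phi_i}$ range over its unit sphere coincides with the original set of product unit vectors. Therefore
\[
\injnorm{\ket{\psi'}}
= \max_{\ket{\phi_i}:\,\langle\phi_i\vert\phi_i\rangle=1}\bigl\lvert \langle \psi \vert \phi_1'\cdots \phi_n'\rangle\bigr\rvert
= \max_{\ket{\phi_i'}:\,\langle\phi_i'\vert\phi_i'\rangle=1}\bigl\lvert \langle \psi \vert \phi_1'\cdots \phi_n'\rangle\bigr\rvert
= \injnorm{\ket{\psi}},
\]
which gives the claimed equality.

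There is essentially no hard step here: the only things to check are the two elementary identities on tensor products of operators, and the observation that unitaries permute the unit spheres bijectively. I would make this explicit in one short paragraph, possibly also mentioning (for completeness) that the maximum is attained because the product of unit spheres is compact and the function $\ket{\phi_1,\ldots,\phi_n}\mapsto \lvert\langle\psi\vert\phi_1\cdots\phi_n\rangle\rvert$ is continuous, so that writing $\max$ rather than $\sup$ in \eqref{eq:inj_norm_def} is justified.
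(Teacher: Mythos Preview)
Your proof is correct and is exactly the standard argument one expects here. The paper itself does not spell out a proof: it simply introduces the proposition by saying the injective norm ``clearly satisfies local unitary invariance'' and moves on, so your write-up supplies the omitted (and routine) details.
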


Next we mention a generic method for deriving upper bounds on the injective norm.

Pick a basis $\{\ket{\alpha_{m_i}^{(i)}}_{i=1}^{d_i}\}$ for each $\mathcal{H}_i$. Using these bases, $\ket{\psi}=\sum_{m_i\in [\![1,d_i]\!]}t_{m_1,\ldots,m_n}\lvert \alpha_{m_1}^{(1)}\ldots \alpha_{m_n}^{(n)}\rangle$. The elements $T=(t_{m_1,\ldots, m_n})\in \bigotimes_{i=1}^{n}\C^{d_i}$ form a tensor.
\begin{definition} \label{def:flattening}
    Let $A\sqcup B=[\![1,n]\!]$ be a non-trivial bipartition of $[n]:=[\![1,n]\!]$. Then a {\em flattening} of a tensor $T$ is the matrix $M(T)=(M(T)_{a,b})$ such that $a,b$ are multi-indices taking values in $a\in \prod_{i_A\in A}[d_{i_A}], b\in \prod_{i_B\in B}[d_{i_B}]$, and the elements $M(T)_{a,b}:=t_{a\sqcup b}$, where $a\sqcup b$ is the multi-index in $\prod_{i=1}^n[d_{i}]$ whose elements $i_j, j\in A$ produce $a$ and elements $i_j, j\in B$ produce $b$.
\end{definition}
The bipartition above induces a bipartition of the total Hilbert space $\mathcal{H}=\bigotimes_{i=1}^n\mathcal{H}_i$ as $\mathcal{H}=\mathcal{H}_A\otimes \mathcal{H}_B$ with $\mathcal{H}_A:=\bigotimes_{i_A\in A} \mathcal{H}_{i_A}$ and $\mathcal{H}_B:=\bigotimes_{i_B\in B} \mathcal{H}_{i_B}$. A flattening is just the reinterpretation of a multipartite quantum state in $\bigotimes_{i=1}^n\mathcal{H}_i$ as a bipartite state in $\mathcal{H}_A\otimes \mathcal{H}_B$.

We shall use:
\begin{lemma}\label{lem:inj<opnorm}
    Let $T$ be a tensor, and $M(T)$ a flattening associated to a bipartition $A,B$ as above. Then 
    \begin{equation}
        \injnorm{T}\le \lnorm M(T)\rnorm_{op}.
    \end{equation}
\end{lemma}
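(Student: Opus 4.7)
The plan is to show that the set of rank-one product states $\ket{\phi_1}\otimes\cdots\otimes\ket{\phi_n}$ over which one maximizes to compute $\injnorm{T}$ is contained in the set of bipartite rank-one product states $\ket{\phi_A}\otimes\ket{\phi_B}\in\mathcal{H}_A\otimes\mathcal{H}_B$, so the max defining $\injnorm{T}$ is a max of the same bilinear form over a smaller set than the max defining $\lnorm M(T)\rnorm_{op}$.

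Concretely, I would first regroup: given unit vectors $\ket{\phi_i}\in\mathcal{H}_i$, define
\[
\ket{\phi_A}\defeq \bigotimes_{i\in A}\ket{\phi_i}\in\mathcal{H}_A,\qquad \ket{\phi_B}\defeq \bigotimes_{i\in B}\ket{\phi_i}\in\mathcal{H}_B,
\]
which are unit vectors since the tensor product of unit vectors has unit norm. Using the bases introduced just before Definition~\ref{def:flattening}, write $\ket{\phi_A}=\sum_a\alpha_a\ket{a}$ and $\ket{\phi_B}=\sum_b\beta_b\ket{b}$, so that $\|\alpha\|=\|\beta\|=1$ in the Euclidean sense. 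By the definition of the flattening, $\ket{T}=\sum_{a,b}M(T)_{a,b}\ket{a}\otimes\ket{b}$, whence
\[
\langle T\mid \phi_1\cdots\phi_n\rangle=\langle T\mid \phi_A\otimes\phi_B\rangle=\sum_{a,b}\overline{M(T)_{a,b}}\,\alpha_a\beta_b,
\]
which is a bilinear form in $(\alpha,\beta)$ with matrix $\overline{M(T)}$.

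Taking absolute values and applying the Cauchy--Schwarz / operator-norm inequality,
\[
\bigl|\langle T\mid \phi_1\cdots\phi_n\rangle\bigr|\le \lnorm M(T)\rnorm_{op}\,\|\alpha\|\,\|\beta\|=\lnorm M(T)\rnorm_{op}.
\]
Maximizing the left-hand side over unit $\ket{\phi_i}\in\mathcal{H}_i$ yields $\injnorm{T}\le \lnorm M(T)\rnorm_{op}$, which is the claim.

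There is essentially no obstacle here: the lemma is a direct consequence of the fact that fully-product unit vectors form a subset of the bipartite-product unit vectors, and the operator norm is exactly the maximum of the same bilinear form over the latter set. The only point requiring a bit of care is keeping the indexing of $M(T)$ consistent with the regrouping $\ket{\phi_A}\otimes\ket{\phi_B}$, which is exactly what Definition~\ref{def:flattening} codifies.
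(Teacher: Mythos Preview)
Your proof is correct. Note that the paper states this lemma without proof (treating it as a standard fact about tensor norms), so there is nothing to compare against; your argument---that fully separable unit vectors form a subset of the bipartite product unit vectors, and that the operator norm is precisely the maximum of the same bilinear form over the latter---is the standard one and is complete as written.
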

Note that the operator norm of $M(T)$ is its largest singular value. This largest singular value can be accessed as the square root of the largest eigenvalue of $\rho_A=M(T)M(T)^*$. It is interesting to remark that $\rho_A$ is the partial trace over $B$ of the density matrix $\rho_{AB}=\ket{T}\bra{T}$ associated to $T$.

In the rest of the paper, we shall focus on the case when $d_i=2$ for every $i$, so that every component Hilbert space $\mathcal{H}_i$ is isomorphic to $\mathbb{C}^2$. The ambient Hilbert space will hereafter be equal to $\mathcal{H}=\otimes_{i=1}^n\mathcal{H}_i$.
Following standard practice we write the canonical basis of $\mathbb{C}^2$ as $\ket{0},\ket{1}$ and elementary product states as $\ket{x}$ for $x\in\F_2^n$. We shall also use the notation $\ket{+}=\frac{1}{\sqrt{2}}(\ket{0}+\ket{1})$ and $\ket{-}=\frac{1}{\sqrt{2}}(\ket{0}-\ket{1})$.

\subsection{Coding Theory}
A binary linear code of length $n$ is an $\F_2$-subvector space of the ambient space $\F_2^n$. It will be useful to have coordinates of a binary vector not necessarily indexed by consecutive integers $1,2,\ldots n$, therefore we will sometimes identify the ambient space $\F_2^n$ with $\F_2^E$ where $E$ is a finite set of cardinality $n$. When a code $C$ is defined in the ambient space $\F_2^E$, we will refer to the {\em length} of $C$ as $\ell(C)=|E|$. A {\em generator matrix} for the code $C$ is a matrix $\mathbf{G}$ whose rows form a basis of $C$ as an $\F_2$-vector space.
The {\em rate} $R(C)$ of a code $C$ is defined as the ratio of its dimension relative to its length: $R(C)=\dim C/\ell(C)$.

\begin{definition}
\label{def:shortened_codes}
Let $C\subset\F_2^n$ be a linear code of length $n$. Let $A\subset [n]$. The {\em punctured code} of $C$ on $A$ is defined as the image of $C$ by the map:
\begin{eqnarray*}
\F_2^n & \rightarrow & \F_2^A\\
(x_1,\ldots ,x_n) & \mapsto & (x_i)_{i\in A}
\end{eqnarray*}
The {\em shortened code} of $C$ on $A$ (or supported by $A$) is defined to be the kernel $C_0$ of the map:
\begin{eqnarray*}
C & \rightarrow & \F_2^{\bar{A}}\\
(x_1,\ldots ,x_n) & \mapsto & (x_i)_{i\in \bar{A}}
\end{eqnarray*}
where $\bar{A}=[n]\setminus A$: the corresponding set $A$ is referred to as the {\em support} of the shortened code.
The code $C_0$ may also be thought of as code in the ambient space $\F_2^A$, so that the length $\ell(C_0)$ of the shortened code will mean $\ell(C_0)=|A|$ and its rate $R(C_0)$ will mean $R(C_0)=\dim C_0/|A|$.
\end{definition}

Let $C$ be a code in $\F_2^n$ and let $x\in\F_2^n$ be a vector in the ambient space. Extending the definition \eqref{eq:ket(C)} of the state $\ket{C}$ to coset states \eqref{eq:basis_states}, namely
\[
\ket{x+C} = \frac{1}{\sqrt{|C|}}\sum_{y\in C}\ket{x+y},
\]
we obtain:
\begin{proposition}
\label{prop:coset}
    For every $x\in\F_2^n$ we have $\injnorm{\ket{x+C}}=\injnorm{\ket{C}}$.
\end{proposition}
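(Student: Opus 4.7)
The plan is to deduce Proposition~\ref{prop:coset} as a direct corollary of the local unitary invariance stated in Proposition~\ref{prop:invariance}. The key observation is that adding a fixed vector $x\in\F_2^n$ to codewords is implemented at the quantum level by tensor products of Pauli-$X$ operators, which are local unitaries.

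More precisely, write $x=(x_1,\ldots,x_n)\in\F_2^n$ and define, for each coordinate $i\in[n]$, the single-qubit unitary $U_i=X^{x_i}$, where $X=\ketbra{0}{1}+\ketbra{1}{0}$ is the bit-flip. Then $X\ket{a}=\ket{a+1}$ for any $a\in\F_2$ (all arithmetic in $\F_2$), so that for any $y=(y_1,\ldots,y_n)\in\F_2^n$,
\[
(U_1\otimes\cdots\otimes U_n)\ket{y}=\ket{x_1+y_1}\otimes\cdots\otimes\ket{x_n+y_n}=\ket{x+y}.
\]

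Applying this tensor product of local unitaries to the definition \eqref{eq:ket(C)} of $\ket{C}$ yields
\[
(U_1\otimes\cdots\otimes U_n)\ket{C}=\frac{1}{\sqrt{|C|}}\sum_{y\in C}\ket{x+y}=\ket{x+C}.
\]
Since each $U_i$ is a unitary operator acting on $\mathcal{H}_i\simeq\mathbb{C}^2$, Proposition~\ref{prop:invariance} gives $\injnorm{\ket{C}}=\injnorm{\ket{x+C}}$, which is the desired equality.

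There is essentially no obstacle here: the statement is a one-line consequence of the definition of $\ket{x+C}$ together with the fact that bit-flips are unitary and act locally on each qubit. The only mild point worth mentioning is that the Pauli operators chosen are qubit-dependent (identity on coordinates where $x_i=0$, bit-flip where $x_i=1$), which is precisely what the tensor-product structure of Proposition~\ref{prop:invariance} allows.
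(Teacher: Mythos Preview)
Your proof is correct and essentially identical to the paper's: both define $U_i$ to be the bit-flip $X^{x_i}$ (the paper writes this as $U_i\ket{0}=\ket{x_i}$, $U_i\ket{1}=\ket{1+x_i}$), observe that $(U_1\otimes\cdots\otimes U_n)\ket{C}=\ket{x+C}$, and conclude via Proposition~\ref{prop:invariance}.
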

\begin{proof}
Let $U_i$ be the unitary operator on the component Hilbert space $\mathcal{H}_i$ defined by $U_i\ket{0}=\ket{x_i}$ and $U_i\ket{1}=\ket{1+x_i}$. We clearly have
\[
(U_1\otimes\ldots \otimes U_n)\ket{C}=\ket{x+C}.
\]
The result therefore follows from Proposition~\ref{prop:invariance}.
\end{proof}

Recall that two classical codes $C_1,C_2$ such that $C_2\subseteq C_1$ define a quantum CSS code $CSS(C_1,C_2)$ that is a subspace in $\mathcal{H}$, whose basis states are defined to be the states $\ket{z}$ for $z\in C_1/C_2$. 
Proposition~\ref{prop:coset} implies:

\begin{proposition}\label{prop:injnorm_const_LU_classes}
    Let $\mathcal{Q}=\text{CSS}(C_1,C_2)$ be the quantum CSS code associated to two classical linear codes $C_1,C_2, \, C_2\subseteq C_1$. The injective norm $\injnorm{\cdot}$ is a constant function on the set of basis states for $\mathcal{Q}$.
\end{proposition}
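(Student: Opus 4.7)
The plan is essentially a one-line deduction from Proposition~\ref{prop:coset}. My strategy is to identify each basis state of $\mathcal{Q} = \text{CSS}(C_1, C_2)$ as a coset state of $C_2$, then invoke the invariance result already established. Concretely, for a basis state $\ket{z}$ with $z \in C_1/C_2$, I would pick any coset representative $x \in C_1$ satisfying $z = x + C_2$ and rewrite, using~\eqref{eq:basis_states},
\[
\ket{z} \;=\; \frac{1}{\sqrt{|C_2|}} \sum_{y \in C_2} \ket{x + y} \;=\; \ket{x + C_2},
\]
which is precisely the coset state introduced just before Proposition~\ref{prop:coset}.

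Next I would apply Proposition~\ref{prop:coset}, which asserts $\injnorm{\ket{x + C_2}} = \injnorm{\ket{C_2}}$ for every $x \in \F_2^n$, and in particular for every $x \in C_1$. This immediately yields $\injnorm{\ket{z}} = \injnorm{\ket{C_2}}$, a value independent of the chosen $z \in C_1/C_2$. The only minor bookkeeping point is to check that the conclusion does not depend on the choice of coset representative $x$, but this is automatic: if $x + C_2 = x' + C_2$ then $\ket{x + C_2} = \ket{x' + C_2}$ as vectors in $\mathcal{H}$, so the injective norms coincide tautologically.

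There is no real obstacle here. All the work has been absorbed into Proposition~\ref{prop:coset}, which is itself a short consequence of the local unitary invariance established in Proposition~\ref{prop:invariance} applied to the single-qubit bit-flip unitaries $U_i\ket{0} = \ket{x_i}$, $U_i\ket{1} = \ket{1 + x_i}$. One could reinterpret the statement more structurally as saying that all basis states of $\mathcal{Q}$ lie in a single orbit of the local unitary group $U(\mathcal{H}_1) \otimes \cdots \otimes U(\mathcal{H}_n)$, which aligns perfectly with the subsequent corollary: combining this proposition with Theorem~\ref{thm:Main} will give the common value $\injnorm{\ket{z}} = 2^{-(k - j(C_2))/2}$ for every basis state of the CSS code.
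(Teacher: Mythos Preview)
Your proposal is correct and follows exactly the paper's approach: the paper states the proposition as an immediate consequence of Proposition~\ref{prop:coset}, and your argument spells out precisely that deduction by writing each basis state $\ket{z}$ as a coset state $\ket{x+C_2}$ and applying the coset invariance.
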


\section{Upper and lower bounds}\label{sec:up-low-bound}
\noindent{\bf Lower bound.}
We start with the lower bound. 
\begin{lemma}[Lower bound]
\label{lem:lower_bound}
    Let $C\subseteq \F_2^n$ a code of dimension $k$. Then,
    \begin{equation}
        \injnorm{\ket{C}}\ge 2^{-\frac12(k-\delta(C))},
    \end{equation}
    with
    \begin{equation}
        \delta(C):=\max_{C_0} \left(2\dim C_0-\ell(C_0)\right)=\max_{C_0} \ell(C_0)(2R(C_0)-1)
    \end{equation}
    where the maximum is over all shortened codes $C_0$ of $C$, $\ell(C_0)$ is the length of $C_0$, and $R(C_0)$ is its rate, according to Definition~\ref{def:shortened_codes}.
\end{lemma}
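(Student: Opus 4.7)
The plan is to exhibit, for each shortened code $C_0$ of $C$ on support $A\subseteq [n]$, an explicit product state whose overlap with $\ket{C}$ is exactly $2^{-\frac12(k-(2\dim C_0-\ell(C_0)))}$; then the definition of the injective norm as a supremum over product states, together with maximization over $C_0$, yields the claimed bound.

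The candidate product state I would choose is
\[
\ket{\phi_A} \defeq \bigotimes_{i\in A}\ket{+}_i \otimes \bigotimes_{i\in \bar A}\ket{0}_i.
\]
The motivation is that $\ket{0}$ on $\bar A$ will kill every codeword of $C$ whose support is not contained in $A$, leaving exactly the codewords belonging to the shortened code $C_0$, while $\ket{+}^{\otimes A}$ is the uniform superposition on $A$-coordinates and hence overlaps equally with every bit-string on $A$.

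Then I would compute $\langle\phi_A | C\rangle$ directly. For $y\in C$, the overlap $\langle \phi_A | y\rangle$ vanishes unless $y_i=0$ for every $i\in\bar A$, i.e.\ unless $y\in C_0$, and in that case $\langle \phi_A | y\rangle = 2^{-|A|/2}$. Writing $k_0=\dim C_0$ and $\ell_0=\ell(C_0)=|A|$, this gives
\[
\langle\phi_A | C\rangle = \frac{1}{\sqrt{|C|}}\sum_{y\in C_0}\frac{1}{\sqrt{2^{\ell_0}}} = \frac{2^{k_0}}{\sqrt{2^{k+\ell_0}}} = 2^{-\frac12(k-(2k_0-\ell_0))}.
\]
Since $\ket{\phi_A}$ is a normalized product state, the definition \eqref{eq:inj_norm_def} of the injective norm yields $\injnorm{\ket C}\ge 2^{-\frac12(k-(2k_0-\ell_0))}$. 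Optimizing over all shortened codes $C_0$ gives $\injnorm{\ket C}\ge 2^{-\frac12(k-\delta(C))}$.

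There is essentially no technical obstacle in this direction: the argument is a one-line computation once the right test state is identified, and the only insight required is to align the positions of the $\ket{0}$'s with the complement of the support of a shortened code so that the sum over $C$ collapses onto a sum over $C_0$. The genuinely hard part of the theorem lies elsewhere — namely, in matching this lower bound with the upper bound, which is the content of the subsequent section on proving $j(C)=\delta(C)$ via matroid intersection.
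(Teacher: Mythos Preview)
Your proof is correct and is essentially identical to the paper's own argument: the same test state $\bigotimes_{i\in A}\ket{+}\otimes\bigotimes_{i\in\bar A}\ket{0}$ is used, the same observation that only codewords of the shortened code $C_0$ survive the overlap, and the same final optimization over $C_0$. There is nothing to add.
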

\begin{proof}
Given a partition $A\sqcup B=[n]$ we compute the scalar product
$$\bbraket{+_A 0_B}{C}\le \injnorm{\ket{C}},$$
where by $\ket{+_A 0_{B}}$ we mean $\bigotimes_{i\in A}\ket{+}\bigotimes_{i\in B}\ket{0}$. 
Let $C_0$ be the shortened code of $C$ supported by $A$. We have:
\begin{align*}
    \bbraket{+_A 0_{B}}{C} &= \frac{1}{\sqrt{|C|}}\sum_{c\in C}\bbraket{+_A 0_{B}}{c}\\
                                 &= \frac{1}{\sqrt{|C|}}\sum_{c\in C_0}\bbraket{+_A 0_{B}}{c}\\
                                 &= \frac{1}{\sqrt{|C|}}\sum_{c\in C_0}\frac{1}{2^{|A|/2}}\sum_{x_A\in\F_2^A}\bbraket{x_A 0_{B}}{c}\\
                                 &= \frac{1}{\sqrt{|C|}}\frac{1}{2^{|A|/2}}|C_0|\\
                                 &=2^{-\frac12(k-2k_0+\ell(C_0))}
\end{align*}
where $k=\dim C$, $k_0=\dim C_0$, and $\ell(C_0)=|A|$. Optimizing over all shortened codes $C_0$ of $C$ proves the claim.
\end{proof}

\begin{remark}
$\delta(C)$ measures how well the dimension of $C$ is distributed over subspaces. In fact, if there exists a shortened code $C_0$ whose rate is strictly larger than $1/2$ then $\delta(C)>0$. In this case $C$ has the property that a large number of its  codewords (namely $2^{k_0}$ of them) has support on a restricted set of positions $A$. 

If there is no such shortened code, $\delta(C)=0$ (the maximum is attained for $A=\emptyset$, so that $k_0=0, \ \ell(C_0)=0$), and the lower bound on the injective norm only depends on the dimension of $k$. In this case the codeword supports are more evenly spread inside $C$.
\end{remark}

\noindent {\bf Upper bound.} 
Let $C\subset \mathbb{F}_2^n$ be a linear code of length $n$ and dimension $k$. We call $j(C)$ the smallest integer $0\le j \le k$ such that there exists a partition $[n]=A\sqcup B$ for which the code $C$ punctured on $A$ (see Definition~\ref{def:shortened_codes}) is of dimension $k$ and punctured on $B$ is of dimension $k-j$. In other words, if we let $\mathbf{G}$ be a generator matrix of the code, the sub-matrix $\mathbf{G}_A$ consisting of columns indexed by $A$ must be of rank $k$ and the sub-matrix  $\mathbf{G}_B$ must be of rank $k-j$. 

\begin{lemma}\label{lem:upper_bound}
Let $C\subset \mathbb{F}_2^n$ be a linear code of dimension $k$ with $j(C)=j$ then $\injnorm{\ket{C}}\le 2^{-\frac12(k-j)}.$
\end{lemma}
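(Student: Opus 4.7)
The plan is to apply Lemma~\ref{lem:inj<opnorm} to the flattening $M$ of $\ket{C}$ along a bipartition $[n]=A\sqcup B$ that witnesses $j(C)=j$, and to show that $\lnorm M\rnorm_{op}=2^{-(k-j)/2}$.

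First I would extract the coding-theoretic content of the defining condition of $j(C)$. By rank-nullity applied to the coordinate projections $\pi_A\colon C\to\F_2^A$ and $\pi_B\colon C\to\F_2^B$, the fact that the punctured code $\pi_A(C)$ has dimension $k$ forces the shortened code $S_B:=\ker\pi_A$ (supported on $B$) to be trivial, while $\dim\pi_B(C)=k-j$ forces the shortened code $S_A:=\ker\pi_B$ (supported on $A$) to have dimension $j$.

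Next I would produce a Schmidt decomposition of $\ket{C}$ along $(A,B)$ by grouping codewords into cosets of $S_A$ inside $C$. Since every $s\in S_A$ satisfies $s_B=0$, all codewords in a coset $c+S_A$ share the same $B$-part $c_B$, while their $A$-parts form the affine set $c_A+\pi_A(S_A)$ of cardinality $2^j$. This rewrites
\begin{equation*}
\ket{C}=\frac{1}{\sqrt{2^{k-j}}}\sum_{\bar c\in C/S_A}\ket{e_{\bar c}}\otimes\ket{c_B},\qquad \ket{e_{\bar c}}:=\frac{1}{\sqrt{2^j}}\sum_{s\in S_A}\ket{c_A+\pi_A(s)},
\end{equation*}
as a sum of $2^{k-j}$ terms with equal weights.

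The last step would be to verify that this is genuinely a Schmidt decomposition. The $B$-side vectors $\ket{c_B}$ for distinct cosets are automatically distinct: $c_B=c'_B$ would imply $c-c'\in S_A$, i.e.\ $\bar c=\bar c'$. For the $A$-side, I would show that distinct cosets yield $\ket{e_{\bar c}}$'s with disjoint support by establishing the equivalence $c_A-c'_A\in\pi_A(S_A)\Leftrightarrow c-c'\in S_A$; the forward implication uses precisely $S_B=0$, since any $s\in S_A$ with $\pi_A(s)=c_A-c'_A$ gives $c-c'-s\in\ker\pi_A=S_B=\{0\}$. Once this is in place, all nonzero Schmidt coefficients of $\ket{C}$ equal $2^{-(k-j)/2}$, hence $\lnorm M\rnorm_{op}=2^{-(k-j)/2}$, and Lemma~\ref{lem:inj<opnorm} yields the claim. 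The only subtle point is this orthogonality check: the triviality of $S_B$ is exactly what prevents distinct cosets of $S_A$ in $C$ from projecting to overlapping cosets of $\pi_A(S_A)$ in $\F_2^A$.
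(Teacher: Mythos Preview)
Your proof is correct and follows the same overall strategy as the paper: flatten $\ket{C}$ along a bipartition witnessing $j(C)=j$, show that all nonzero Schmidt coefficients equal $2^{-(k-j)/2}$, and invoke Lemma~\ref{lem:inj<opnorm}. The packaging differs slightly. The paper first reduces to an information set of size exactly $k$ (standard form $\mathbf{G}=[\mathbf{I}\mid\mathbf{A}]$ with $\mathrm{rk}\,\mathbf{A}=k-j$) and then computes $M^*M$ explicitly as a diagonal matrix whose nonzero entries count prefixes sharing a suffix, namely $|\ker\mathbf{A}|=2^j$. You instead work directly with the defining partition $A\sqcup B$ (no size constraint on $A$) and exhibit the Schmidt decomposition by grouping codewords into cosets of the shortened code $S_A=\ker\pi_B$. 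The two arguments are really the same observation in different clothing: your ``$S_B=0$'' is the paper's ``a prefix has a unique suffix'' (injectivity of $\pi_A$ on $C$), and your coset size $2^j$ is the paper's $|\ker\mathbf{A}|$. Your version has the minor advantage of not needing the preliminary reduction to a size-$k$ information set.
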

The proof relies on the decomposition of the coordinate set $[n]$ into an information set \cite{prange1962use} for $C$ and its complement. An information set is a subset of positions such that the codeword coordinates at those positions uniquely identify the codeword. Given a codeword we call prefix the subword of positions corresponding to the chosen information set of the code, while we call suffix the subword made of the remaining positions. Rephrasing the above, for any given prefix $c_{\textrm{pre}}$ of a codeword $c$ of $C$, there is a unique associated suffix, allowing one to reconstruct the full codeword given one prefix.
\begin{proof}
    We recall $\ket{C}:=\frac1{\sqrt{\lvert C\rvert}}\sum_{c\in C}\ket{c}\in (\mathbb{C}^2)^{\otimes n}$. Let $\mathbf{G}$ be the generator matrix of the code $C$. We assume, without loss of generality, that it is in standard form $$\mathbf{G}=\begin{pmatrix} \begin{matrix}
     1 &\ldots & 0\\
     \vdots & \ddots &\vdots \\
     0& \ldots&1
    \end{matrix}&\rvline&
    \begin{matrix}
&&\\
&\mathbf{A}&\\
&&
\end{matrix}\end{pmatrix}$$
meaning that the left sub-matrix is the $k\times k$ identity matrix, and the right submatrix is some $k\times (n-k)$ matrix $\mathbf{A}$ with $\textrm{rank}(\mathbf{A})=k-j$. 
The first $k$ positions thus form an information set, allowing us to split the words into prefix and suffix. This splitting induces a splitting of the Hilbert space into the Hilbert space of prefixes (of dimension $2^k$) and the Hilbert space of suffixes (of dimension $2^{n-k}$) so that we see $\ket{C}\in (\mathbb{C}^2)^{\otimes n}\simeq(\mathbb{C}^2)^{\otimes k}\otimes (\mathbb{C}^2)^{\otimes n-k}$. To this splitting is associated a flattening of $\ket{C}$ according to the partition $A=[k], B=[n]\backslash[k]$ (see definition \ref{def:flattening}). Denoting this flattening $M(\ket{C})$ we have according to lemma \ref{lem:inj<opnorm}, 
$$\injnorm{\ket{C}}\le \lnorm M(\ket{C})\rnorm_{\textrm{op}}.$$
The matrix $M(\ket{C})\in \textrm{Mat}_{2^k\times 2^{n-k}}(\mathbb{C})$ is the matrix whose lines are indexed by elements of $\mathbb{F}_2^k$, which is also the set of prefixes of all codewords of $C$, while columns are indexed by elements of $\mathbb{F}_2^{n-k}$. 
Let $\mathbf{a}\in \mathbb{F}_2^k$ and $\mathbf{b}\in \mathbb{F}_2^{n-k}$. There is a one at position $\mathbf{a}, \mathbf{b}$ of  $\sqrt{\lvert C\rvert }M(\ket{C})$ if the prefix $\mathbf{a}$ has for suffix $\mathbf{b}$ in $C$. Every other element is zero. The square of the operator norm of $M(\ket{C})$ is the largest eigenvalue of $\rho_{\textrm{suffix}}:=M(\ket{C})^*M(\ket{C}).$ 
We have for $\mathbf{b}, \mathbf{b}'\in \mathbb{F}_2^{n-k}$
$$\lvert C\rvert(\rho_{\textrm{suffix}})_{\mathbf{b},\mathbf{b}'}=\lvert C\rvert\sum_{\mathbf{a}\in \mathbb{F}_2^k}(M(\ket{C})^*)_{\mathbf{b},\mathbf{a}}(M(\ket{C}))_{\mathbf{a},\mathbf{b}'}=N(\mathbf{b})\delta_{\mathbf{b},\mathbf{b}'},$$
where $N(\mathbf{b})$ is the number of prefixes having $\mathbf{b}$ as suffix. 
Indeed, the element $\mathbf{b},\mathbf{b}'$ of $\lvert C\rvert\rho_{\textrm{suffix}}$ is the scalar product (over $\mathbb{C}$) of the line $\mathbf{b}$ of $\sqrt{\lvert C\rvert}M(\ket{C})^*$ with the column $\mathbf{b}'$ of $\sqrt{\lvert C\rvert}M(\ket{C})$ which is easily seen to count the number of prefixes having $\mathbf{b},\mathbf{b}'$ as suffixes. But thanks to the information set property that a prefix has a unique suffix, we must have $\mathbf{b}=\mathbf{b}'$, otherwise the scalar product vanishes. Therefore the largest eigenvalue of $\rho_{\textrm{suffix}}$ is the largest value of $N(\mathbf{b})$ (divided by $\lvert C \rvert$). 
The number of prefixes having the same suffix $\mathbf{b}$ counts the different linear combinations of lines of $\mathbf{A}$ leading to the same bitstring $\mathbf{b}$ or equivalently, (at the cost of adding $\mathbf{b}$ to those combinations), the number of different vanishing linear combinations of lines of $\mathbf{A}$, that is $N(\mathbf{b})=\lvert\textrm{ker}_{\mathbb{F}_2}  \mathbf{A} \rvert=2^{j}$. In particular, if $\mathbf{b}$ is indeed a suffix of a codeword in $C$, then $N(\mathbf{b})$ does not depend on $\mathbf{b}$. We conclude that 
\[
\lnorm M(\ket{C})\rnorm_{\textrm{op}}=\frac{2^{\frac12j}}{\sqrt{\lvert C\rvert}}=2^{-\frac12(k-j)}.
\qedhere
\]
\end{proof}

\section{Matching upper and lower bounds}\label{sec:main_equality}
Let $C\subset\F_2^n$ be a code such that $j(C)=0$. Then Lemmas~\ref{lem:upper_bound} and~\ref{lem:lower_bound} together imply that $\delta(C)=0$ so that the upper and lower bounds match. 
To prove Theorem~\ref{thm:Main} it will therefore be enough to prove:
\begin{theorem}\label{thm:pre_main}
    Let $C\subset \mathbb{F}_2^n$ be a linear code, such that $j=j(C)\ge 1$, then there exists $C_0$ a shortened code of $C$ of dimension $k_0>0$ and length $2k_0-j$. 
\end{theorem}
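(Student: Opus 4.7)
The plan is to reinterpret both $j(C)$ and $\delta(C)$ as quantities attached to the linear matroid $M$ on $[n]$ whose independent sets are the linearly independent subsets of columns of a generator matrix $\mathbf{G}$ of $C$, and then to appeal to the matroid union formula, itself a direct consequence of Edmonds' intersection theorem (matching the hint in the introduction).

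Let $r(S) := \dim(C|_S)$ denote the rank function of $M$. I would first rewrite $j(C)$: in any partition $[n]=A\sqcup B$ achieving the minimum, the constraint $r(A)=k$ only says that $A$ is spanning, and if $A$ strictly contains a basis I can move the dependent elements into $B$ without changing $r(A)=k$ while weakly increasing $r(B)$ by monotonicity. So I may restrict to $|A|=k$ and obtain
\[
 j(C) \;=\; \min_{A\text{ basis of }M}\bigl(k - r([n]\setminus A)\bigr).
\]
Then for $\delta(C)$, the shortened code $C_0$ supported on $S$ has dimension $k_0=k-r([n]\setminus S)$ and length $\ell(C_0)=|S|$; setting $T=[n]\setminus S$ gives
\[
 \delta(C) \;=\; \max_{S\subseteq [n]}\bigl(2k - 2r([n]\setminus S) - |S|\bigr) \;=\; 2k - \min_{T\subseteq [n]}\bigl(n - |T| + 2r(T)\bigr).
\]

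Next I would apply the matroid union formula to two copies of $M$ on the ground set $[n]$,
\[
 \max_{I_1,\,I_2 \text{ independent in }M}|I_1 \cup I_2| \;=\; \min_{T\subseteq [n]}\bigl(n - |T| + 2r(T)\bigr),
\]
and evaluate the left-hand side. By a standard exchange argument, $I_1$ may be assumed to be a basis of $M$, since extending any independent $I_1$ to a basis never shrinks $I_1\cup I_2$. Then $|I_1\cup I_2|=k+|I_2\cap([n]\setminus I_1)|$, whose maximum over independent $I_2$ is $k+r([n]\setminus I_1)$, attained by taking $I_2$ to be a basis of $M$ restricted to $[n]\setminus I_1$. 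Maximizing further over bases $I_1=A$ yields $k + \max_A r([n]\setminus A) = 2k - j(C)$.

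Combining the three displays gives $\min_T(n - |T| + 2r(T)) = 2k - j(C)$, and hence $\delta(C) = j(C)$. Since $j = j(C)\geq 1$ by assumption, the maximum defining $\delta$ is attained by some shortened code $C_0$ satisfying $2k_0-\ell(C_0) = j$, i.e.\ $\ell(C_0)=2k_0-j$; together with $\ell(C_0)\geq 0$ and $j\geq 1$ this forces $k_0\geq 1$, giving the desired shortened code. The main obstacle of the argument is the matroid union identity itself, where Edmonds' intersection theorem does the heavy lifting; the two reformulations of $j(C)$ and $\delta(C)$ above are routine translations once the right matroid has been identified.
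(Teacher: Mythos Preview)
Your proof is correct. Both arguments identify the column matroid of $\mathbf{G}$ and reduce to the same min--max identity $\min_{T\subseteq[n]}\bigl(n-|T|+2r(T)\bigr)=2k-j(C)$, but reach it by slightly different routes: the paper applies Edmonds' intersection theorem directly to $M$ and its \emph{dual} $M^*$ (and then expands $r^*(\bar S)=r(S)+|\bar S|-k$), whereas you apply the matroid \emph{union} formula to two copies of $M$ and read off the same minimum. Your route has the mild advantage of yielding the global equality $\delta(C)=j(C)$ in one stroke, while the paper's route is marginally more self-contained since it invokes Edmonds' theorem without the extra layer of the union formula. The reductions you give for $j(C)$ (restricting to bases) and $\delta(C)$ (substituting $k_0=k-r(\bar S)$) match the paper's implicit manipulations, and your final observation that $\ell(C_0)=2k_0-j\ge 0$ with $j\ge 1$ forces $k_0\ge 1$ is exactly the positivity check the paper does via $\mathrm{rk}_1(S)=k-k_0$.
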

Indeed, if such a shortened code $C_0$ exists, then by definition of $\delta(C)$ we must have $\delta(C)\geq 2k_0 - (2k_0-j) =j$,
but then Lemmas~\ref{lem:upper_bound} and~\ref{lem:lower_bound} imply that the corresponding upper and lower bounds are equal.

We shall prove Theorem~\ref{thm:pre_main} using arguments from matroid theory. To this aim, we introduce the relevant background and state one of the cornerstones of the theory and essential element of the proof of theorem \ref{thm:pre_main}, namely Edmonds' intersection theorem.

\subsection{Elements of matroid theory}
Matroids are combinatorial objects that are defined to abstract and generalize the concept of linear independence of vectors in linear algebra. They can also be seen as a way to generalize concepts from graph theory. For a gentle introduction to matroids and some of their applications, we refer to the notes \cite{matroidsnotes} and the associated book \cite{oxley2006matroid}. For our purpose we need only a few concepts of matroid theory, that is the definition of a matroid, its bases, its dual, and the rank functions over a matroid. In particular, this allows us to state Edmonds' intersection theorem that we use in the next section to prove Theorem~\ref{thm:pre_main}.

One way to define a matroid over a finite set $X$ is through the data of independent subsets of $X$. 
\begin{definition}
Let $X$ be a finite set, a finite matroid $\mathcal{M}(X,\mathcal{I})$ on $X$ is a couple $(X,\mathcal{I})$ where  $\mathcal{I}\subseteq 2^X$ is a subset of the powerset of $X$ satisfying the following constraints
\begin{enumerate}
    \item $\emptyset\in \mathcal{I}$
    \item if $J\in \mathcal{I}$ and $I\subset J$ then $I\in \mathcal{I}$
    \item if $I,J\in \mathcal{I}$ such that $\lvert I\rvert<\lvert J\rvert$, there exists $x\in J\backslash I$ such that $I\cup\{x\}\in \mathcal{I}$.
\end{enumerate}
The set $\mathcal{I}$ is called the \textit{set of independents}, and its elements are the \textit{independent sets}.
\end{definition}
The archetypical example of a matroid is the following.
    Let $V$ be a dimension $N$ vector space over a field $K$ endowed with a specific generating set of vectors $X=\{e_1,\ldots,e_R\}, R>N$. Let $\mathcal{I}$ be the set of linearly independent subsets of $X$. This endows $(V,X)$ with the structure of a matroid. Additionally, if one thinks of $\{e_1,\ldots,e_R\}$ as the columns of a matrix $M$, then the above matroid structure is known as the column matroid of $M$. 
    
We can now introduce bases of a matroid as 
\begin{definition}
Let $\mathcal{M}(X,\mathcal{I})$ be a finite matroid. A basis is a largest element $B\in \mathcal{I}$, that is for all $I\in \mathcal{I}, \lvert B\rvert\ge \lvert I\rvert.$
\end{definition}

It is straightforward to show that all bases of a matroid have the same cardinality.

We now define the dual of a matroid
\begin{definition}
The dual $\mathcal{M}^*(X,\mathcal{I}^*)$ of a finite matroid $\mathcal{M}(X,\mathcal{I})$ is the matroid over $X$ whose set of independents $\mathcal{I}^*$ is formed of subsets $I\subseteq X$ whose complement $\bar I$ contain a basis of $\mathcal{M}(X,\mathcal{I})$. 
\end{definition}

Alternatively, the dual of a finite matroid $\mathcal{M}(X,\mathcal{I})$ on $X$ is the matroid whose basis sets are the complements of the basis sets of $\mathcal{M}(X,\mathcal{I})$.

The rank function of a given matroid is defined on the subsets of $X$, 
\begin{eqnarray}
    \text{rk}:2^X&\rightarrow& \mathbb{Z}_+\\
    S&\mapsto &\max\{\lvert I \rvert: I\subseteq S, I\in \mathcal{I}\},
\end{eqnarray}
that is the size of the largest independent contained in $S$. One shows that the rank function of the dual matroid is \begin{equation}\label{eq:dual-rank}
    \text{rk}^*(S)=\text{rk}(X\backslash S)+\lvert S\rvert-\text{rk}(X)
\end{equation}
The cornerstone theorem of matroid theory is arguably Edmonds' intersection theorem \cite{edmonds2003submodular}:

\begin{theorem}[Edmonds' theorem - The matroid intersection theorem]\label{thm:Edmond}
Let $\mathcal{M}_1(X,\mathcal{I}_1), \mathcal{M}_2(X,\mathcal{I}_2)$ be two matroids over the same set $X$. The following equality holds
\begin{equation}
    \max_{I\in \mathcal{I}_1\cap \mathcal{I}_2} \lvert I \rvert=\min_{S\subseteq X}[\mathrm{rk}_1(S)+\mathrm{rk}_2(\bar S)],
\end{equation}
where the complement $\bar S$ of $S$ is meant in $X$, that is $\bar S= X\backslash S$.
\end{theorem}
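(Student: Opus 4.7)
The plan is to establish the two inequalities separately. The direction $\max\le\min$ is a short counting argument; the reverse inequality is the substantive content, proved via an augmenting-path argument on an exchange digraph associated to a maximum common independent set. For weak duality, fix any $I\in\mathcal{I}_1\cap\mathcal{I}_2$ and any $S\subseteq X$, and decompose $I=(I\cap S)\sqcup(I\cap\bar S)$. Each piece is independent in the respective matroid and lies in $S$ or $\bar S$ respectively, so $|I\cap S|\le \mathrm{rk}_1(S)$ and $|I\cap\bar S|\le\mathrm{rk}_2(\bar S)$, yielding $|I|\le\mathrm{rk}_1(S)+\mathrm{rk}_2(\bar S)$ and hence $\max\le\min$.

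\textbf{Exchange digraph and augmentation.} For the reverse inequality, fix a maximum common independent set $I$ and build a directed graph $D_I$ on vertex set $X$ with two families of arcs: for $x\in I$, $y\in X\setminus I$, an arc $x\to y$ whenever $(I\setminus\{x\})\cup\{y\}\in\mathcal{I}_1$, and an arc $y\to x$ whenever $(I\setminus\{x\})\cup\{y\}\in\mathcal{I}_2$. Designate sources $X_1:=\{y\notin I:I\cup\{y\}\in\mathcal{I}_1\}$ and sinks $X_2:=\{y\notin I:I\cup\{y\}\in\mathcal{I}_2\}$. If a directed path from $X_1$ to $X_2$ exists, take a shortest such path $y_0,x_1,y_1,\ldots,x_t,y_t$: swapping along it produces $I':=(I\setminus\{x_1,\ldots,x_t\})\cup\{y_0,\ldots,y_t\}$ of size $|I|+1$. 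Showing $I'\in\mathcal{I}_1\cap\mathcal{I}_2$, and thereby deriving a contradiction to maximality of $I$, relies on the standard matroid lemma that if the bipartite exchange subgraph between $I\setminus I'$ and $I'\setminus I$ in $\mathcal{M}_j$ admits a unique perfect matching, then $I'\in\mathcal{I}_j$. The shortest-path hypothesis rules out ``shortcut'' arcs and thus forces uniqueness in both $\mathcal{M}_1$ and $\mathcal{M}_2$ simultaneously.

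\textbf{Extracting a tight dual witness.} Since no augmenting path exists, let $U\subseteq X$ be the set of vertices from which some vertex of $X_2$ is reachable in $D_I$, and set $S:=X\setminus U$; then $X_1\subseteq S$. I claim $\mathrm{rk}_1(S)=|I\cap S|$ and $\mathrm{rk}_2(\bar S)=|I\cap\bar S|$. For the first, suppose some $J\subseteq S$ in $\mathcal{I}_1$ had $|J|>|I\cap S|$: the matroid augmentation axiom applied to $I\cap S$ inside $J$ produces $y\in J\setminus I$ with $(I\cap S)\cup\{y\}\in\mathcal{I}_1$, and tracking how $y$ exchanges with the elements of $I\cap U$ via $\mathcal{M}_1$-arcs forces either $y\in X_1\subseteq S$ directly, or $y\in U$, both contradicting $y\in S$ and $S\cap U=\emptyset$. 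The second equality is the dual statement for $\mathcal{M}_2$, using the arcs oriented into $U$. Summing them yields $|I|=\mathrm{rk}_1(S)+\mathrm{rk}_2(\bar S)$, which together with weak duality closes the equality.

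\textbf{Main obstacle.} The principal technical hurdle is the augmentation step, because swapping along an arbitrary $X_1$-to-$X_2$ path generically destroys independence in at least one of the two matroids; the simultaneous preservation in both $\mathcal{M}_1$ and $\mathcal{M}_2$ hinges on the unique-perfect-matching lemma combined with shortest-path minimality to exclude chords. By comparison, the extraction of the tight $S$ from the reachability structure of $D_I$ is a relatively soft application of the exchange axiom.
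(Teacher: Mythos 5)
The paper does not actually prove Theorem~\ref{thm:Edmond}: it is quoted from Edmonds' work \cite{edmonds2003submodular} and used as a black box, so your proposal has to be measured against the standard literature proof rather than against anything in the paper. Your architecture is exactly that standard proof (Lawler/Schrijver): weak duality by splitting $I$ into $I\cap S$ and $I\cap\bar S$, which is correct as written, and strong duality via the exchange digraph, shortest augmenting paths, and the unique-perfect-matching exchange lemma, which is an acceptable sketch of the standard augmentation argument.

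The step that fails is the extraction of the tight set $S$. With your definition of $U$ (vertices from which $X_2$ is reachable) and your choice $S:=X\setminus U$, the claimed equalities $\mathrm{rk}_1(S)=|I\cap S|$ and $\mathrm{rk}_2(\bar S)=|I\cap\bar S|$ are false in general. Indeed, whenever $X_1\neq\emptyset$, absence of augmenting paths gives $X_1\cap U=\emptyset$, so any $y\in X_1$ lies in $S$; since $I\cup\{y\}\in\mathcal{I}_1$, the set $(I\cap S)\cup\{y\}$ is an $\mathcal{M}_1$-independent subset of $S$ of size $|I\cap S|+1$, so $\mathrm{rk}_1(S)>|I\cap S|$. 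A minimal counterexample: $X=\{1\}$, $\mathcal{M}_1$ the free matroid of rank $1$, $\mathcal{M}_2$ of rank $0$; then $I=\emptyset$, $X_2=\emptyset$, $U=\emptyset$, $S=\{1\}$, and $\mathrm{rk}_1(S)+\mathrm{rk}_2(\bar S)=1\neq 0=|I|$. Your own justification betrays the mix-up: you list ``$y\in X_1\subseteq S$'' as contradicting $y\in S$, which it does not. The fix is an orientation swap: either keep your $U$ and take $S:=U$ (then $\mathrm{rk}_1(U)\le|I\cap U|$, because a violating $z$ would either lie in $X_1\cap U=\emptyset$ or produce an $\mathcal{M}_1$-arc $y\to z$ entering $U$ from $y\in I\setminus U$, forcing $y\in U$; and $\mathrm{rk}_2(X\setminus U)\le|I\setminus U|$ symmetrically), or redefine $U$ as the set of vertices reachable \emph{from} $X_1$ and keep $S:=X\setminus U$. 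With either correction the two bounds sum to $|I|$ and the argument closes as you intend.
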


\subsection{Proof of Theorem \ref{thm:pre_main}}
\begin{proof}
Let $\mathbf{G}$ be a generator matrix of the code $C$ of theorem \ref{thm:pre_main}. Let $j:=j(C)$ be as announced, \textit{i.e.} the smallest integer such that there exists a bipartition $A\sqcup \bar{A}=[n]$ inducing a submatrix $\mathbf{G}_A$ of rank $k$ and another submatrix $\mathbf{G}_{\bar{A}}$ of rank $k-j$.

Let $\mathcal{M}_1$ be the column matroid of $\mathbf{G}$, meaning the matroid on $X=[n]$ whose independent sets are the subsets $I\subset [n]$ such that the columns of $\mathbf{G}$ indexed by $I$ are linearly independent. We denote by $\text{rk}_1$ its rank function. We let $\mathcal{M}_2$ be the dual matroid of $\mathcal{M}_1$ and denote by $\text{rk}_2$ its rank function.

We remark that with those matroid structures, by definition of $j$, $k-j$ is the maximal value of $\text{rk}_1$ over the complements of bases of $\mathcal{M}_1$. Moreover, letting $B$ be a basis of $\mathcal{M}_1(X,\mathcal{I}_1)$, we have $\text{rk}_1(\bar B)=\max\{\lvert I\rvert:I\in \mathcal{I}_1, I\subseteq \bar B\}$. Hence, by definition of the dual matroid such $I$'s are also independent in $\mathcal{M}_2$, $k-j$ is the largest size of a set $I\subset [n]$ such that $I$ is independent in both $\mathcal{M}_1$ and $\mathcal{M}_2$. Therefore, letting $\mathcal{I}_1,\mathcal{I}_2$ be the sets of independents of respectively $\mathcal{M}_1,\mathcal{M}_2$, Edmonds' intersection theorem \cite{edmonds2003submodular} (Theorem \ref{thm:Edmond}) tells us that
\begin{equation}\label{eq:intersec-Edmond-proof}
    k-j = \max_{I\in \mathcal{I}_1\cap\mathcal{I}_2}\lvert I\rvert=\min_{S\subset [n]}[\text{rk}_1(S)+\text{rk}_2(\bar{S})].
\end{equation}
Consider $S\subset [n]$ achieving the minimum of the right-hand side of \eqref{eq:intersec-Edmond-proof}. Since we have supposed $j\geq 1$, there must exist $k_0>0$ such that $\text{rk}_1(S)=k-k_0$. 
We now let $x=\lvert \bar S\rvert$ so that, applying~\eqref{eq:dual-rank}, we have $\text{rk}_2(\bar S) =\mathrm{rk}_1(S)+|\bar{S}|-k=x-k_0$.

Since $S$ achieves the minimum of the right-hand side of \eqref{eq:intersec-Edmond-proof} we now have $k-j=(k-k_0)+(x-k_0)$ so that $\lvert \bar S\rvert=x=2k_0-j$. We now subdivide the matrix $\mathbf{G}$ into the set of columns indexed by $S$ and the set of columns indexed by $\bar S$. 
Assuming, without loss of generality, that the submatrix $\mathbf{G}_S$ whose columns are indexed by $S$ is the $k\times |S|$ leftmost submatrix of $\mathbf{G}$, we obtain, since $\mathrm{rk}(\mathbf{G}_S)=k-k_0$, that $\mathbf{G}$ multiplied on the left by an appropriate invertible matrix is of the form
\begin{equation}
\mathbf{G}'=
\left(
\begin{array}{c|c}
\mathbf{0} & \mathbf{G}_0 \\ \hline
\mathbf{G}_1 & \mathbf{G}_2
\end{array}
\right)
\end{equation}
where the top left $k_0\times |S|$ submatrix is the zero matrix. The top right submatrix $\mathbf{G}_0$ must therefore be
a $k_0\times (2k_0-j)$ matrix, and it must be of rank $k_0$ for the rank of $\mathbf{G}'$ to be of rank $k$. Since $\mathbf{G}'$ is, like $\mathbf{G}$, a generator matrix for the code $C$, we have that the shortened code $C_0$ of $C$ supported by $\bar{S}$ has dimension $k_0$ and length $2k_0-j$.
\end{proof}

\bibliographystyle{alpha}
\bibliography{biblio_QEC_norm}

\newcommand{\etalchar}[1]{$^{#1}$}
\begin{thebibliography}{OWBVdN14}

\bibitem[AFLR24]{akers2024reflected}
Chris Akers, Thomas Faulkner, Simon Lin, and Pratik Rath.
\newblock Reflected entropy in random tensor networks. part iii. triway cuts.
\newblock {\em Journal of High Energy Physics}, 2024(12):1--70, 2024.

\bibitem[AMM10]{aulbach2010maximally}
Martin Aulbach, Damian Markham, and Mio Murao.
\newblock The maximally entangled symmetric state in terms of the geometric
  measure.
\newblock {\em New Journal of Physics}, 12(7):073025, 2010.

\bibitem[BGJ{\etalchar{+}}24]{bandeira2024geometric}
Afonso~S Bandeira, Sivakanth Gopi, Haotian Jiang, Kevin Lucca, and Thomas
  Rothvoss.
\newblock A geometric perspective on the injective norm of sums of random
  tensors.
\newblock {\em arXiv preprint arXiv:2411.10633}, 2024.

\bibitem[Boe24]{boedihardjo2024injective}
March~T Boedihardjo.
\newblock Injective norm of random tensors with independent entries.
\newblock {\em arXiv preprint arXiv:2412.21193}, 2024.

\bibitem[BS25]{bates2025balanced}
Erik Bates and Youngtak Sohn.
\newblock Balanced multi-species spin glasses.
\newblock {\em arXiv preprint arXiv:2507.06522}, 2025.

\bibitem[CCZ{\etalchar{+}}17]{cui2017local}
Meiyu Cui, Jingmei Chang, Ming-Jing Zhao, Xiaofen Huang, and Tinggui Zhang.
\newblock Local unitary invariants of quantum states.
\newblock {\em International Journal of Theoretical Physics},
  56(11):3579--3587, 2017.

\bibitem[CGL23]{collins2023tensor}
Beno{\^\i}t Collins, Razvan Gurau, and Luca Lionni.
\newblock The tensor harish-chandra--itzykson--zuber integral i: Weingarten
  calculus and a generalization of monotone hurwitz numbers.
\newblock {\em Journal of the European Mathematical Society}, 26(5):1851--1897,
  2023.

\bibitem[CS96]{Calderbank1996}
A.~R. Calderbank and Peter~W. Shor.
\newblock {Good quantum error-correcting codes exist}.
\newblock {\em Physical Review A}, 54(2):1098--1105, August 1996.

\bibitem[DM24]{dartois2024injective}
Stephane Dartois and Benjamin McKenna.
\newblock Injective norm of real and complex random tensors i: From spin
  glasses to geometric entanglement.
\newblock {\em arXiv preprint arXiv:2404.03627}, 2024.

\bibitem[Edm71]{edmonds1971matroids}
Jack Edmonds.
\newblock Matroids and the greedy algorithm.
\newblock {\em Mathematical programming}, 1:127--136, 1971.

\bibitem[Edm03]{edmonds2003submodular}
Jack Edmonds.
\newblock Submodular functions, matroids, and certain polyhedra.
\newblock In {\em Combinatorial Optimization—Eureka, You Shrink! Papers
  Dedicated to Jack Edmonds 5th International Workshop Aussois, France, March
  5--9, 2001 Revised Papers}, pages 11--26. Springer, 2003.

\bibitem[FLN22]{fitter2022estimating}
Khurshed Fitter, Cecilia Lancien, and Ion Nechita.
\newblock Estimating the entanglement of random multipartite quantum states.
\newblock {\em arXiv preprint arXiv:2209.11754}, 2022.

\bibitem[Haa13]{haah2013lattice}
Jeongwan Haah.
\newblock {\em Lattice quantum codes and exotic topological phases of matter}.
\newblock California Institute of Technology, 2013.

\bibitem[HL13]{hillar2013most}
Christopher~J. Hillar and Lek-Heng Lim.
\newblock Most tensor problems are {NP}-hard.
\newblock {\em Journal of the ACM (JACM)}, 60(6):1--39, 2013.

\bibitem[JHK{\etalchar{+}}08]{jung2008reduced}
Eylee Jung, Mi-Ra Hwang, Hungsoo Kim, Min-Soo Kim, DaeKil Park, Jin-Woo Son,
  and Sayatnova Tamaryan.
\newblock Reduced state uniquely defines the groverian measure of the original
  pure state.
\newblock {\em Physical Review A—Atomic, Molecular, and Optical Physics},
  77(6):062317, 2008.

\bibitem[Jin12]{jing2012classes}
Naihuan Jing.
\newblock On classes of local unitary transformations.
\newblock In {\em Algebra Colloquium}, volume~19, pages 283--292. World
  Scientific, 2012.

\bibitem[KL10]{kitaev2010topological}
Alexei Kitaev and Chris Laumann.
\newblock Topological phases and quantum computation.
\newblock {\em Exact methods in low-dimensional statistical physics and quantum
  computing}, pages 101--125, 2010.

\bibitem[MS77]{macwilliams1977theory}
Florence~Jessie MacWilliams and Neil James~Alexander Sloane.
\newblock {\em The theory of error-correcting codes}, volume~16.
\newblock Elsevier, 1977.

\bibitem[NC00]{nielsen2000quantum}
M.A. Nielsen and I.L. Chuang.
\newblock {\em Quantum Computation and Quantum Information}.
\newblock Cambridge Series on Information and the Natural Sciences. Cambridge
  University Press, 2000.

\bibitem[OW11]{orus2011topological}
Roman Orus and Tzu-Chieh Wei.
\newblock Topological geometric entanglement.
\newblock {\em arXiv preprint arXiv:1108.1537}, 2011.

\bibitem[OWBVdN14]{orus2014geometric}
Rom{\'a}n Or{\'u}s, Tzu-Chieh Wei, Oliver Buerschaper, and Maarten Van~den
  Nest.
\newblock Geometric entanglement in topologically ordered states.
\newblock {\em New Journal of Physics}, 16(1):013015, 2014.

\bibitem[Oxl06]{oxley2006matroid}
James~G Oxley.
\newblock {\em Matroid theory}, volume~3.
\newblock Oxford University Press, USA, 2006.

\bibitem[Oxl25]{matroidsnotes}
James Oxley.
\newblock Briefly, what is a matroid?
\newblock \url{https://www.math.lsu.edu/~oxley/matroid_intro_summ.pdf},
  accessed 2025.

\bibitem[Pra62]{prange1962use}
Eugene Prange.
\newblock The use of information sets in decoding cyclic codes.
\newblock {\em IRE Transactions on Information Theory}, 8(5):5--9, 1962.

\bibitem[PRBK24]{placke2024topological}
Benedikt Placke, Tibor Rakovszky, Nikolas~P Breuckmann, and Vedika Khemani.
\newblock Topological quantum spin glass order and its realization in qldpc
  codes.
\newblock {\em arXiv preprint arXiv:2412.13248}, 2024.

\bibitem[PWW23]{penington2023fun}
Geoff Penington, Michael Walter, and Freek Witteveen.
\newblock Fun with replicas: tripartitions in tensor networks and gravity.
\newblock {\em Journal of High Energy Physics}, 2023(5):1--30, 2023.

\bibitem[Sas24]{sasakura2024signed}
Naoki Sasakura.
\newblock Signed eigenvalue/vector distribution of complex order-three random
  tensor.
\newblock {\em Progress of Theoretical and Experimental Physics},
  2024(5):053A04, 2024.

\bibitem[SG24]{steinberg2024finding}
Jonathan Steinberg and Otfried G{\"u}hne.
\newblock Finding maximal quantum resources.
\newblock {\em Physical Review A}, 110(6):062428, 2024.

\bibitem[Ste96]{Steane1996}
A.~M. Steane.
\newblock {Multiple-particle interference and quantum error correction}.
\newblock {\em Proceedings of the Royal Society of London. Series A:
  Mathematical, Physical and Engineering Sciences}, 452(1954):2551--2577,
  November 1996.

\bibitem[Sto25]{stojnic2025ground}
Mihailo Stojnic.
\newblock Ground state energies of multipartite $ p $-spin models--partially
  lifted rdt view.
\newblock {\em arXiv preprint arXiv:2509.05916}, 2025.

\bibitem[WG03]{wei2003geometric}
Tzu-Chieh Wei and Paul~M Goldbart.
\newblock Geometric measure of entanglement and applications to bipartite and
  multipartite quantum states.
\newblock {\em Physical Review A}, 68(4):042307, 2003.

\bibitem[YL24]{yin2024low}
Chao Yin and Andrew Lucas.
\newblock Low-density parity-check codes as stable phases of quantum matter.
\newblock {\em arXiv preprint arXiv:2411.01002}, 2024.

\bibitem[Yos15]{yoshida2015topological}
Beni Yoshida.
\newblock Topological color code and symmetry-protected topological phases.
\newblock {\em Physical Review B}, 91(24):245131, 2015.

\bibitem[ZCH10]{zhu2010additivity}
Huangjun Zhu, Lin Chen, and Masahito Hayashi.
\newblock Additivity and non-additivity of multipartite entanglement measures.
\newblock {\em New Journal of Physics}, 12(8):083002, 2010.

\end{thebibliography}
\end{document}